\DeclareMathOperator{\diag}{diag}
\DeclareMathOperator{\E}{E}
\DeclareMathOperator{\rank}{rank}
\newcommand{\set}[1]{\left\{#1\right\}}
\newtheorem{definition}{Definition}
\newtheorem{theorem}{Theorem}
\newtheorem{remark}{Remark}
\newtheorem{lemma}{Lemma}
\begin{document}

\title{Local Mode Dependent Decentralized $H_{\infty}$ Control of Uncertain Markovian Jump Large-scale Systems \thanks{This work was supported by National Natural Science Foundation
    of China under Grant 61004044, Program for New Century Excellent Talents
    in University (11-0880), and the Fundamental Research Funds for the
    Central Universities (WK2100100013).}
}
\author{Guilin Zhuansun, Junlin Xiong\thanks{Department of Automation, University of Science and Technology of
    China, Hefei 230026, China.}}

\maketitle

\begin{abstract}
This paper considers the problem of robust $H_{\infty}$  control using decentralized state feedback controllers for a class of  large-scale systems with Markov jump parameters.  A  sufficient condition is developed to design   controllers using local system states and local system operation modes. The sufficient condition is given in terms of rank constrained linear matrix inequalities. An illustrative numerical example is given to demonstrate the developed theory.
\end{abstract}
\begin{IEEEkeywords}
$H_{\infty}$ control, large-scale systems, LMIs, Markov jump parameters.
\end{IEEEkeywords}

\section{Introduction}
In physical world, many systems, for example, power systems, digital communication networks, economic systems and ecological systems, can be seen as large-scale systems. Generally, large-scale systems represent a class of
dynamic systems with high dimensions and complex structures, and such systems are usually  characterized by geographical separation. When economic cost and reliability of  communication links have to been taken into account,
the decentralized control scheme is useful~\cite{Sandell1978}. In recent control literatures, much attention has been given to decentralized control problems~\cite{Ugrinovskii2005,Xiong2009,Ugrinovskii2000}. Compared with centralized controllers, decentralized controllers are designed only using locally available information of the subsystems, which means a lower level of connectivity and  communication costs.

 During the past years, there has been quite a lot of attention devoted to the study of  Markovian jump linear systems~(MJLSs)~and some useful control theories for this class of systems have been developed, such as stabilization~\cite{Farias1992,Ma2009,Feng1992}, $H_2$ control~\cite{xiong2009},  $H_{\infty}$ control~\cite{Fagoso1993,Li2007,Zhang2009}, 2D systems~\cite{Wu2012H2,Wu2008}, singular systems~\cite{Wu2012L2,Wu2010}, and model reduction~\cite{Zhang2008}. Factually, large-scale systems are often subject to abrupt  changes in their structures due to component failures or repairs, sudden environmental disturbances, changing subsystem interconnections, etc. These structural variations often lead to random variations of the system parameters. In these situations, the model of MJLSs can be used  to capture the abrupt changes and some study of large-scale systems  using the model of MJLSs can be seen in ~\cite{Ugrinovskii2005,Xiong2009,Xiong2010,li2007}.  In the system setup, the changes of operation modes in the large scale system are governed by a
Markov process, referred to as  global mode process, while the subsystems of the large scale system have their own operation modes.
 Due to the interaction of the subsystems, the changes of subsystem operation modes are not necessarily Markovian.  In ~\cite{Xiong2009,Xiong2010}, the controllers  designed only using the state and mode information locally available within the subsystems are referred to as local mode dependent controllers. A local mode dependent controller changes its operation mode only when the corresponding subsystem changes its operation mode. The  controllers studied in~\cite{Ugrinovskii2005,li2007}, known as global mode dependent controllers, change their gains whenever any of the subsystems switches to a different mode.  Hence, compared with the global mode control scheme, the local mode control scheme can effectively reduce the number of control gains  and remove undesirable transient dynamics.   As  the local  mode processes for the subsystems is non-Markovian, the local mode dependent controllers can not be obtained directly, but can be derived from the global ones, and the derivation can be seen in Section \uppercase\expandafter{\romannumeral3}-B of \cite{Xiong2009}.

 In the large-scale systems considered in this paper, there are two kinds of internal uncertainties  affecting the dynamics of system considered in this paper, which are local uncertainties and interconnection uncertainties. The local uncertainties  result from the existence of uncertain dynamics in each subsystem and the interconnection uncertainties  result from the fact that the subsystems are interrelated and interact with each other. In this paper these uncertainties are assumed to satisfy integral quadratic constraints (IQCs), which have been proven to be effective in a number of robust control problems, for example, see~\cite{Ugrinovskii2005,Petersen2006,li2007}.

 In \cite{Xiong2009}, with no exogenous disturbances entering the system, the authors considered the problem of decentralized stabilization and developed a sufficient condition to design local mode decentralized stabilizing controllers. However, the external
disturbances are often unavoidable in more practical situations. Apart from internal uncertainties, each subsystem of the large-scale system is also affected by exogenous input signals.  This paper focuses on the problem of designing decentralized state feedback control laws to reduce the affect the disturbance input has on the controlled output in an $H_{\infty}$ sense, with the controllers designed using local system states and operation modes information.  The purpose is to find controllers such that  the closed-loop large-scale system is  mean square stable (MSS) and  its $H_{\infty}$ norm is below a prescribed level. This study could be considered as a further development of the result of~\cite{Xiong2009}.

This paper is organized as follows. Section \uppercase\expandafter{\romannumeral2} gives some preliminary knowledge of the Markov jump systems. Section \uppercase\expandafter{\romannumeral3} formulates the class of uncertain large-scale systems with Markovian jumping parameters.  In section \uppercase\expandafter{\romannumeral4}, a sufficient condition is developed to construct local mode dependent decentralized robust $H_\infty$ controllers so that the closed-loop system is mean square stable  and its $H_{\infty}$ norm is below a prescribed level.
 A numerical example is presented in section \uppercase\expandafter{\romannumeral5} to illustrate the usefulness of the developed theory. Section \uppercase\expandafter{\romannumeral6} concludes this paper.

$Notation$: $\mathbb{R}^{+}$ denotes the set of positive real numbers, $\mathbb{S}^{+}$ denotes the set of real symmetric positive definite matrices. The $\star$ in symmetric matrix is used to represent the blocks induced by symmetry.   $ \|\cdot\|$ refers to the Euclidean norm for vectors and the induced 2-norm for matrices.  $L_2[0,\infty)$ denotes the space of square integrable vector functions  over $[0,\infty)$.
\section{Preliminaries}
Consider a continuous-time  Markov jump parameter system $\mathcal{S}_0$ on a complete probability space $(\Omega,\mathcal{F},\Pr)$ described by the following state-space equations:
\begin{align*}
\mathcal{S}_0:\left\{
\begin{array}{l}
\dot{x}(t)=A(\eta_t)x(t)+B(\eta_t)w(t),\\
z(t)=C(\eta_t)x(t)+D(\eta_t)w(t),
\end{array}
\right.
\end{align*}
where $x(t)\in\mathbb{R}^n$ is the state, $w(t)$ is the disturbance input, assumed to be an arbitrary signal in $L_2[0,\infty)$, $z(t)$ is the controlled output,  $\eta_t$ is a time-homogeneous Markov process  taking values on the finite set $\mathcal{M}=\set{1,2,\ldots,M}$. Let $Q=(q_{\mu\nu})\in\mathbb{R}^{M\times M}$ be the transition rate matrix of the process, with $q_{\mu\nu}\geqslant 0,~\mu\neq \nu$ and $q_{\mu\mu}=-\sum_{\nu\neq \mu}^{M} q_{\mu\nu}\leqslant 0$, such that transition probabilities of the system mode variable $\eta_t$ satisfy
\begin{equation*}
P(\eta_{t+\Delta} =\nu |\eta_t=\mu)=\left\{
\begin{array}{l l}
q_{\mu\nu}\Delta+o(\Delta), &\mu\neq \nu,\\
1+q_{\mu\mu}\Delta+o(\Delta), &\mu =\nu,
\end{array}
\right.
\end{equation*}
 When $\eta_t=i$, we have $A(\eta_t)=A_i$, $B(\eta_t)=B_i$, $C(\eta_t)=C_i$, $D(\eta_t)=D_i$, respectively. The following definition generalizes the concept of   mean square stable for Markov jump systems~\cite{Farias1992}.

\begin{definition}
The system $\mathcal{S}_0$ with $w(t) \equiv 0$ is mean square stable  if
\begin{equation*}
\lim_{t\rightarrow \infty}\E(\|x(t)\|^2)=0.
\end{equation*}
for any initial condition $x(0)\in\mathbb{R}^n$ and  $\eta_0\in\mathcal{M}$.
\end{definition}
The next definition is a generalization of  the $H_{\infty}$-norm for  Markov jump systems~\cite{Fagoso1993}.
\begin{definition}
Suppose system $\mathcal{S}_0$ is mean square stable, then the $H_{\infty}$-norm of the system is defined as
 \begin{equation*}
\|\mathcal{S}_0\|_{\infty}=\sup_{\substack{ \ w(\cdot) \in L_2[0,\infty) \\x(0)=0, \|w(\cdot)\|\neq 0}} \frac{\left\{\E\left[\int_0^{\infty}\|z(t)\|^2dt\right]\right\}^{1/2}}{\left\{\int_0^{\infty}\|w(t)\|^2dt\right\}^{1/2}}.
\end{equation*}

\end{definition}

With the definitions given above, an LMI characterization for the $H_{\infty}$ norm of system $\mathcal{S}_0$ can be described by the following bounded real lemma, see~\cite{Zhang2008, Marcos2008}.
\begin{lemma}
 Given $\gamma>0$, system $\mathcal{S}_0$ is mean square stable with $\|\mathcal{S}_0\|_{\infty}<\gamma$ if and only if there is $P_i \in \mathbb{S}^{+}$ such that the coupled LMIs
\begin{align*}
\begin{bmatrix}
A_i^TP_i+P_iA_i+\sum_{j=1}^M q_{ij}P_j & P_i B_i & C_i^T\\
B_i^T P_i & -\gamma^2 I & D_i^T \\
C_i & D_i & -I
\end{bmatrix}<0
\end{align*}
are satisfied for all $i=1,\ldots, M$.
\end{lemma}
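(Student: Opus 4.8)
The plan is to prove the two directions separately, using the stochastic Lyapunov function $V(x,i)=x^TP_ix$ and the infinitesimal generator $\mathcal{L}$ of the Markov process. For a trajectory with $\eta_t=i$, Dynkin's formula gives
\begin{equation*}
\mathcal{L}V = x^T\Bigl(A_i^TP_i+P_iA_i+\sum_{j=1}^M q_{ij}P_j\Bigr)x + 2x^TP_iB_iw.
\end{equation*}
A first step common to both directions is to apply the Schur complement to the coupled LMI with respect to its third block row and column (the $-I$ block), rewriting it as
\begin{equation*}
\begin{bmatrix} A_i^TP_i+P_iA_i+\sum_{j=1}^M q_{ij}P_j+C_i^TC_i & P_iB_i+C_i^TD_i \\ B_i^TP_i+D_i^TC_i & -\gamma^2 I+D_i^TD_i \end{bmatrix}<0,
\end{equation*}
which is exactly the quadratic dissipation inequality I will exploit.

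For \textbf{sufficiency}, first set $w\equiv 0$: the $(1,1)$ block above forces $A_i^TP_i+P_iA_i+\sum_{j=1}^M q_{ij}P_j<0$, so $\mathcal{L}V<0$ for $x\neq 0$, which by the standard Lyapunov argument for Markov jump systems yields $\lim_{t\to\infty}\E(\|x(t)\|^2)=0$, i.e. mean square stability. Next, to bound the norm, I consider
\begin{equation*}
J=\E\Bigl[\int_0^\infty\bigl(\|z(t)\|^2-\gamma^2\|w(t)\|^2\bigr)\,dt\Bigr].
\end{equation*}
Using $x(0)=0$ and $\E[V(x(\infty),\eta_\infty)]=0$ (guaranteed by MSS), I add and subtract $\int_0^\infty \mathcal{L}V\,dt$ so that the integrand becomes $(x^T,w^T)\Psi_i(x^T,w^T)^T$, where $\Psi_i$ is precisely the negative-definite matrix displayed above after substituting $z=C_ix+D_iw$. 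Negative definiteness then gives $J<0$, and hence $\|\mathcal{S}_0\|_\infty<\gamma$.

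For \textbf{necessity}, which I expect to be the main obstacle, the difficulty is converting the strict norm bound into the existence of feasible $P_i\in\mathbb{S}^+$. The approach I would take is to use the slack in the strict inequality: since $\|\mathcal{S}_0\|_\infty<\gamma$ there is $\epsilon>0$ with $\|\mathcal{S}_0\|_\infty<\gamma-\epsilon$, and together with mean square stability this guarantees a stabilizing solution of the coupled algebraic Riccati equations
\begin{equation*}
A_i^TP_i+P_iA_i+\sum_{j=1}^M q_{ij}P_j+C_i^TC_i+(P_iB_i+C_i^TD_i)(\gamma^2I-D_i^TD_i)^{-1}(B_i^TP_i+D_i^TC_i)=0.
\end{equation*}
This solution makes the dissipation inequality hold with equality; a perturbation argument using the spare $\epsilon$ then yields a strict solution, and reversing the two Schur complements recovers the stated coupled LMIs. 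The delicate part is establishing Riccati solvability in the stochastic setting, which I would either develop through a limiting argument on finite-horizon game-type value functions or invoke directly from the stochastic bounded real lemma results in~\cite{Zhang2008,Marcos2008}.
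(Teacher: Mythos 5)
The paper does not actually prove this lemma: it is quoted as a known bounded real lemma for Markov jump linear systems with a pointer to~\cite{Zhang2008,Marcos2008}, so there is no in-paper argument for your proposal to diverge from. Your sketch follows the standard route those references take, and the parts you work out are essentially right: the Schur complement with respect to the $-I$ block is computed correctly, and the sufficiency direction via $V(x,i)=x^TP_ix$, the generator identity, and the completion $\|z\|^2-\gamma^2\|w\|^2+\mathcal{L}V=(x^T\ w^T)\Psi_i(x^T\ w^T)^T$ is the standard dissipation argument. Two caveats. First, concluding $\|\mathcal{S}_0\|_\infty<\gamma$ needs slightly more than ``$J<0$ for each $w$'': since $\mathcal{M}$ is finite you have a uniform $\epsilon>0$ with $\Psi_i\le-\epsilon I$ for all $i$, giving $\E\int_0^\infty\|z\|^2dt\le(\gamma^2-\epsilon)\int_0^\infty\|w\|^2dt$, which is what actually bounds the supremum strictly below $\gamma$; you should make that margin explicit. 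Second, your necessity direction is an outline rather than a proof --- the existence of a stabilizing solution to the coupled Riccati equations under the strict norm bound, and the perturbation to a strict inequality, are precisely the substantive content of the stochastic bounded real lemma, and you end by invoking the same references the paper cites. That is a reasonable division of labour given that the paper itself treats the whole lemma as imported, but be aware that the hard half of the equivalence is being delegated, not established.
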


\section{Problem Statement}

Consider  an uncertain Markovian jump large-scale system
$\mathcal{S}$  which is comprised of $N$ subsystems. The $i$-th subsystem $\mathcal{S}_i$, $i \in \mathcal{N}\triangleq\{1,2,\ldots,N\} $, is described by
\begin{equation*}
\label{local mode system}
 \mathcal{S}_i:\left\{
 \begin{aligned}
 &\dot x_i(t) =A_i(\eta_{i,t})x_i(t)+B_i(\eta_{i,t})u_i(t) +E_i(\eta_{i,t})r_i(t)+F_i(\eta_{i,t})\xi_i(t)+G_i(\eta_{i,t})w_i(t),\\
&z_i(t) =C_i(\eta_{i,t})x_i(t)+D_i(\eta_{i,t})u_i(t),\\
&\zeta_i(t)=H_i(\eta_{i,t})x_i(t),
 \end{aligned} \right.
\end{equation*}
where $x_i(t)\in \mathbb{R}^{n_i} $ is the state of the subsystem $ \mathcal{S}_i$, $ u_i(t)$ is the control input,
$r_i(t)$ is the interconnection input, describing the effect of the other subsystems $ \mathcal{S}_j$, $j\neq i$, on subsystem  $ \mathcal{S}_i $, $\xi_i(t)$ is the local uncertainty input, $w_i(t) \in L_2[0,\infty) $ is the  exogenous disturbance input,  $z_i(t)$ is the controlled output,
 $\zeta_i(t)$ is the uncertainty output,  $ \eta_{i,t} $  denotes the operation mode of subsystem $ \mathcal{S}_i $  and  takes values in a finite state space $ \mathcal{M}_i\triangleq\left\{1,2,\ldots,M_i\right\}$. The initial condition of subsystem $\mathcal{S}_i$ is given by $x_{i0}\in \mathbb{R}^{n_i}$ and $\eta_{i0} \in \mathcal{M}_i $. The structure of the $i$-th subsystem is shown in Fig.1.

\begin{center}
\begin{picture}(180,180)
\put(-250,-550){\resizebox{250mm}{300mm}{\includegraphics[width=0.5\textwidth]{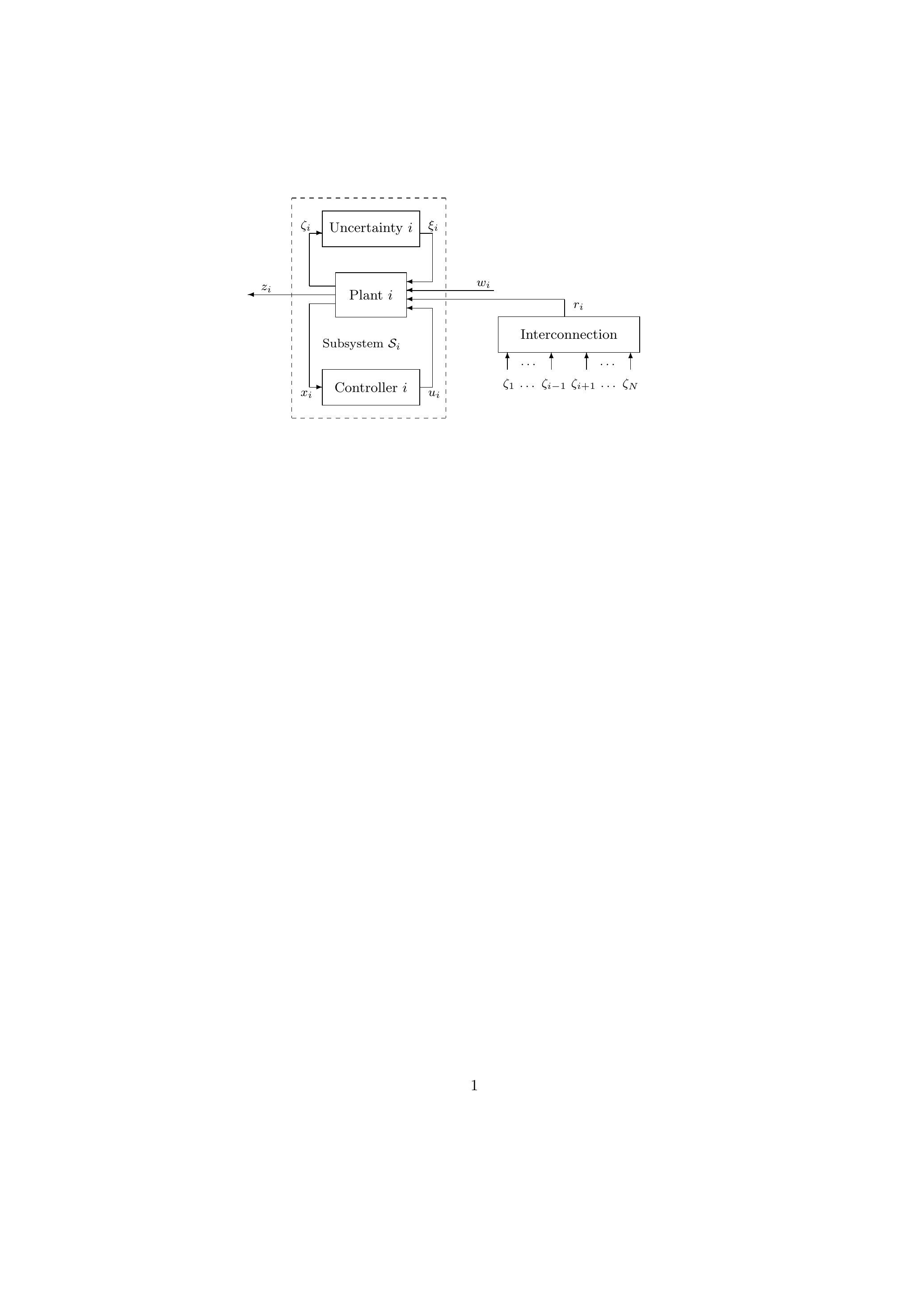}}}
%title
\put(0,0){\shortstack{\quad \quad \quad Fig. 1. The structure of the subsystem $\mathcal{S}_i$. }}
\end{picture}
\end{center}

For the large-scale system $\mathcal{S}$,  an operation mode pattern set $\mathcal{M}_p$ is introduced to describe the vector mode states visited by $ \left[\eta_{1,t}~\ldots~\eta_{N,t}\right]$. Let $\mathcal{M}_p$ be a non-empty subset of the set $\mathcal{M}_1\times\cdots\times\mathcal{M}_N$, and is supposed to have $M$ elements where $\max_{i \in \mathcal{N}} M_i \leq M \leq \Pi_{i=1}^NM_i$. In this case we say the large-scale system has $M$ global operation modes. Let $ \mathcal{M}\triangleq\left\{1,2,\ldots,M\right\}$, then there must exist a bijective function $ \Psi : \mathcal{M}_p\rightarrow \mathcal{M} $, given by $\mu=\Psi(\mu_1,\mu_2,\ldots,\mu_N)$. The inverse function  $ (\mu_1,\mu_2,\ldots,\mu_N)=\Psi^{-1}(\mu)$ defines the mapping $ \Psi^{-1}:\mathcal{M}\rightarrow \mathcal{M}_p $. Also, the components of the inverse function $ \Psi^{-1}$ are denoted by  $ \Psi_i^{-1}$, i.e., $ \mu_i=\Psi_i^{-1}(\mu)$.  The large-scale system is said in global mode $\mu$ at time $t$ if $\Psi\left(\eta_{1,t}~\ldots~\eta_{N,t}\right)=\mu$. Defined $\eta_t\triangleq \Psi \left(\eta_{1,t}~\ldots~\eta_{N,t}\right)$, and $\eta_t$ is used to represent the global mode process. The global mode process is assumed to be a stationary ergodic Markov process with the state transition rate matrix given by $Q=(q_{\mu\nu})\in\mathbb{R}^{M\times M}$. Let $q_\infty$ denote the steady state distribution of $\eta_t$, and it can be computed according to:
$$q_\infty=\textbf{e}(Q+\textbf{E})^{-1}$$
where $\textbf{e}=[1~~ 1~\cdots~ 1]\in\mathbb{R}^{1\times M}$, $\textbf{E}=[\textbf{e}^T~~ \textbf{e}^T~\cdots~ \textbf{e}^T]^T\in\mathbb{R}^{M\times M}$

The relationship between the local operation modes of the subsystems and the global operation modes of the entire system has been established.  Using an auxiliary system, under some conditions, if the  global mode dependent controllers are given, the local ones can be derived, which will be shown in details in the next section.

It is also assumed that the system $\mathcal{S}$ satisfies the following assumptions. For all $i=1,\ldots, N$,
\begin{enumerate}
\item $D_i^T(\eta_{i,t})[C_i(\eta_{i,t})~~ D_i(\eta_{i,t})]=[0 ~~R_i(\eta_{i,t})], \quad R_i(\eta_{i,t})=R_i^T(\eta_{i,t})>0$.
\item The pair $[A_i(\eta_{i,t}),~B_i(\eta_{i,t})]$ is  stabilizable.
\end{enumerate}
\begin{remark}
Assumption 1 is used to simplify the
derivation of the main result without loss of generality. Under  Assumption 2,  the stabilizability of each subsystem in every operation modes does not imply
that the Markovian jump large-scale system is stabilizable. The reason is twofold. Firstly, the
stability of Markovian jump systems depends not only on the matrix A but also on the mode
transition rate matrix. Secondly, the stability of all subsystems cannot guarantee the stability
of a large scale system due to the interconnections between subsystems.
\end{remark}

In this paper, the local uncertainties in each subsystem and the interconnections between subsystems
 are described by operators
$$\xi_i(t)=\phi_i^{\xi}(t,\zeta_i(\cdot)|_0^t,\eta_i(\cdot)|_0^t),$$
$$r_i(t)=\phi_i^{r}(t,\underline{\zeta}_i(\cdot)|_0^t,\eta_i(\cdot)|_0^t),$$
where $\underline{\zeta}_{i}(\cdot)\triangleq
\left[\zeta_{1}(\cdot),~\cdots~,\zeta_{i-1}(\cdot),~\zeta_{i+1}(\cdot),~\cdots,~\zeta_N(\cdot)\right]$. They
 are assumed to satisfy certain integral quadratic constraints, and the definitions are presented as below.
\begin{definition}[\cite{Xiong2009}] Given a set of matrices $\bar{S}_{i}\in\mathbb{S}^{+}$, $i\in\mathcal{N}$. A
  collection of local uncertainty inputs $\xi_{i}(t)$, $i\in\mathcal{N}$, is an
  admissible local uncertainty for the large-scale system if there exists a
  sequence $\set{t_{l}}_{l=1}^{\infty}$ such that $t_{l}\ge0$, $t_{l}\to\infty$,
  and for all $l$ and for all $i\in\mathcal{N}$,
\label{definition for local uncertainties}
\begin{equation}
\E\left(\int_0^{t_l}\left[\|\zeta_i(t)\|^2-\|\xi_i(t)\|^2\right]dt|x_0,\eta_0\right)\geq -x_{i0}^{T}\bar{S_i}x_{i0},
\end{equation}
  where $x_{0}=[x_{10}^{T},\ldots,x_{N0}^{T}]^{T}$, and $\eta_{0}=\eta(0)$. The set of the admissible local uncertainties is denoted by $\Xi$.
\end{definition}

\begin{definition}
[\cite{Xiong2009}] Given a set of matrices $\tilde{S}_{i}\in\mathbb{S}^{+}$, $i\in\mathcal{N}$.
  The large-scale system is said to have admissible
  interconnections between subsystems if there exists a sequence
  $\set{t_{l}}_{l=1}^{\infty}$ such that $t_{l}\ge0$, $t_{l}\to\infty$, and
  for all $l$ and for all $i\in\mathcal{N}$,
\label{definition for interconnections}
\begin{align}
&\E\left(\int_0^{t_l}\left[\left(\sum_{j=1,j\neq i}^{N}\|\zeta_j(t)\|^2\right)-\|r_i(t)\|^2\right]dt|x_0,\eta_0\right)  \geq -x_{i0}^{T}\tilde{S_i}x_{i0},
\end{align}
The set of the admissible interconnections is denoted by $\Pi$.
\end{definition}

Suppose the system $\mathcal{S}$ is subject to the uncertainty constraints (1) and (2), the aim of this paper is to design local mode dependent decentralized controllers
\begin{equation}
u_i(t)=K_i(\eta_{i,t})x_i(t),  \quad\quad i \in \mathcal{N},
\end{equation}
such that the system $\mathcal{S}$ is mean square stable and  the
following $H_\infty$ performance is satisfied: Given $\gamma>0$, if $x_0=0$,   for all
$i=1,\ldots N$,  then
\begin{equation}
\sup_{\substack{w(\cdot) \in L_2[0,\infty),\xi(\cdot)\in \Xi  \\  \|w(\cdot)\|\neq 0, r(\cdot) \in \Pi}} \frac{\left\{\E\left[ \int_0^{\infty}(\sum_{i=1}^{N}\|z_i(t)\|^2) dt\right]\right\}^{1/2}}{\left\{\int_0^{\infty}(\sum_{i=1}^{N}\|w_i(t)\|^2) dt\right\}^{1/2}}<\gamma .
\end{equation}

\section{The Main Results}
In this section,  a sufficient condition will be developed to design  local mode dependent controllers. Due to the local mode dependent controllers can not be obtained directly and the global operation mode process $\eta_t$ is Markovian, let us enlarge the mode state space of the subsystem in $\mathcal{S}$ and   consider an auxiliary class of uncertain systems $\tilde{\mathcal{S}}$  that contains the uncertain system $\mathcal{S}$ as a special case. There exist  necessary and sufficient conditions for the design of $H_\infty$ controllers for this auxiliary  system, see~\cite{Zhang2008,Marcos2008}.
The $i$-th subsystem of $\tilde{\mathcal{S}}$, $i \in \mathcal{N} $, is described by
\begin{equation*}
\label{auxiliary global mode system}
\tilde{\mathcal{S}}_i:\left\{
\begin{array}{l} \dot{\tilde{x}}_i(t)=\tilde{A}_i(\eta_t)\tilde{x}_i(t)+\tilde{B}_i(\eta_t)\left[\tilde{u}_i(t)+\tilde{\xi}^u_i(t)\right]+\tilde{E}_i(\eta_t)\tilde{r}_i(t)
  +\tilde{F}_i(\eta_t)\tilde{\xi}_i(t)+\tilde{G_i}(\eta_t)\tilde{w}_i(t),\\
 \tilde{z}_i(t)=\tilde{C}_i(\eta_t)\tilde{x}_i(t)+\tilde{D}_i(\eta_t)\left[\tilde{u}_i(t)+\tilde{\xi}^u_i(t)\right],\\
 \tilde{\zeta}_i(t)=\tilde{H}_i(\eta_t)\tilde{x}_i(t),
\end{array}
\right.
\end{equation*}
where $\tilde{A}_i(\mu)=A_i(\mu_i)$,  $\tilde{B}_i(\mu)=B_i(\mu_i)$, $\tilde{C}_i(\mu)=C_i(\mu_i)$, $\tilde{D}_i(\mu)=D_i(\mu_i)$, $\tilde{E}_i(\mu)=E_i(\mu_i)$, $\tilde{F}_i(\mu)=F_i(\mu_i)$, $\tilde{G}_i(\mu)=G_i(\mu_i)$, $\tilde{H}_i(\mu)=H_i(\mu_i)$, $\mu\in \mathcal{M}$ and $ \mu_i=\Psi_i^{-1}(\mu)\in\mathcal{M}_i$. The uncertainty input $\tilde{\xi}_i(t)$ and $\tilde{r}_i(t)$ are, respectively, generated by the same operator as $\xi_i(t)$ and $r_i(t)$, and satisfy the IQCs in (1), (2). $w_i(t)$, $z_i(t)$, $\zeta_i(t)$ are replaced with $\tilde{w}_i(t)$, $\tilde{z}_i(t)$, $\tilde{\zeta}_i(t)$, respectively.  $\tilde{\xi}_i^u$ is the uncertainty in the control input, satisfying the following IQCs.
\begin{definition} [\cite{Xiong2009}]
Given $\beta_i^u(\mu)\in \mathbb{R}^{+}$,$ i \in \mathcal{N}$,$ \mu \in \mathcal{M}$. A collection of uncertainty input $\tilde{\xi}_i^u(t)$ ,$ i \in \mathcal{N}$, is an admissible uncertainty input for the auxiliary large-scale system  $\tilde{\mathcal{S}}$ if
\label{definition for uncertainty in global system}
\begin{equation}
\E\left(\int_0^{t_l}\left(\beta_i^u(\eta_t)\|\tilde{x}_i(t)\|^2-\|\tilde{\xi}_i^u(t)\|^2\right)dt|\tilde{x}_0,\eta_0\right)\geq 0
\end{equation}
for all $ l$ and for all $ i \in \mathcal{N}$. The set of the admissible uncertainty inputs is denoted by $\tilde{\Xi}^{u}$.
\end{definition}

Without loss of generality, we assume that the same sequence
$\{t_{l}\}_{l=1}^{\infty}$ is used in Definitions 3, 4, 5.

Correspondingly, the global mode dependent decentralized controllers of the auxiliary large-scale system will be designed as the form below
\begin{equation}
\tilde{u}_i(t)=\tilde{K}_i(\eta_t)\tilde{x}_i(t), \quad\quad  i \in \mathcal{N}.
\end{equation}

The following lemma relates the stabilization of system  $\mathcal{S}$ with local mode dependent controllers (3) and the stabilization of the auxiliary system  $\tilde{\mathcal{S}}$ with global mode dependent controllers (6).

\begin{lemma} [\cite{Xiong2009}]
\label{lemma 1}
Suppose controllers (6) stabilize the uncertain large-scale system  $\tilde{\mathcal{S}}$ subject to the IQCs (1), (2), (5). If the control gains $K_i(\cdot)$ in (3) are chosen so that
\begin{equation}
\|\tilde{K}_i(\mu)-K_i(\mu_i)\|^2\leq\beta_i^u(\mu),
\end{equation}
where $\mu \in \mathcal{M}$, $ \mu_i=\Psi_i^{-1}(\mu)\in \mathcal{M}_i$, for all $i \in \mathcal{N}$, then the controllers in (3) stabilize the uncertain large-scale system  $\mathcal{S}$ subject to the IQCs (1) and (2).
\end{lemma}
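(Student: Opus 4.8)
The plan is to show that every closed-loop trajectory of $\mathcal{S}$ driven by the local mode controller (3) can be reproduced \emph{exactly} as a closed-loop trajectory of the auxiliary system $\tilde{\mathcal{S}}$ driven by the global mode controller (6), at the price of introducing a particular control-input uncertainty $\tilde{\xi}_i^u$ that turns out to be admissible in the sense of Definition 5. Once this trajectory embedding is in place, mean square stability of the closed loop $\mathcal{S}$ is inherited from the assumed stability of $\tilde{\mathcal{S}}$.

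First I would observe that the two systems are driven by the same global Markov mode $\eta_t$ and that the local modes satisfy $\eta_{i,t}=\Psi_i^{-1}(\eta_t)$ by definition. Hence, whenever $\eta_t=\mu$ with $\mu_i=\Psi_i^{-1}(\mu)$, the parameterizations coincide: $\tilde{A}_i(\eta_t)=A_i(\eta_{i,t})$, $\tilde{B}_i(\eta_t)=B_i(\eta_{i,t})$, and likewise for the remaining matrices. Thus the open-loop dynamics of $\tilde{\mathcal{S}}_i$ and $\mathcal{S}_i$ are identical sample-path by sample-path, and since $\tilde{\xi}_i,\tilde{r}_i$ and $\xi_i,r_i$ are generated by the same operators, the IQCs (1) and (2) transfer verbatim.

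The key algebraic step is to absorb the difference between the local and global control gains into the control-input uncertainty. Applying (3) to $\mathcal{S}_i$ yields the closed-loop drift $[A_i(\mu_i)+B_i(\mu_i)K_i(\mu_i)]x_i$. Writing
$$B_i(\mu_i)K_i(\mu_i)=B_i(\mu_i)\tilde{K}_i(\mu)+B_i(\mu_i)\bigl[K_i(\mu_i)-\tilde{K}_i(\mu)\bigr],$$
I would define the candidate control-input uncertainty
$$\tilde{\xi}_i^u(t):=\bigl[K_i(\eta_{i,t})-\tilde{K}_i(\eta_t)\bigr]\tilde{x}_i(t).$$
With this choice and the identification $\tilde{x}_i\equiv x_i$, $\tilde{u}_i=\tilde{K}_i(\eta_t)\tilde{x}_i$, the auxiliary closed-loop term $\tilde{B}_i(\eta_t)[\tilde{u}_i+\tilde{\xi}_i^u]$ reduces exactly to $B_i(\eta_{i,t})K_i(\eta_{i,t})x_i$, so the closed loops of $\mathcal{S}$ and $\tilde{\mathcal{S}}$ agree along every sample path under matched initial data $x_0=\tilde{x}_0$, $\eta_0$.

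Finally I would verify that this $\tilde{\xi}_i^u$ lies in $\tilde{\Xi}^u$, i.e.\ satisfies (5). By the definition of the induced norm and the gain-proximity hypothesis (7),
$$\|\tilde{\xi}_i^u(t)\|^2\le\|K_i(\eta_{i,t})-\tilde{K}_i(\eta_t)\|^2\,\|\tilde{x}_i(t)\|^2\le\beta_i^u(\eta_t)\,\|\tilde{x}_i(t)\|^2,$$
so the integrand in (5) is pointwise nonnegative and (5) holds for every $l$ with the trivial lower bound $0$, even after conditioning and taking expectation. Since controllers (6) stabilize $\tilde{\mathcal{S}}$ for \emph{all} admissible uncertainty triples---including the particular $(\tilde{\xi}_i,\tilde{r}_i,\tilde{\xi}_i^u)$ just constructed---and the trajectories coincide, the controllers (3) render $\mathcal{S}$ mean square stable. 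The argument is not genuinely hard; the one point requiring care is verifying that the substitution really keeps $\tilde{\xi}_i^u$ inside the admissible class $\tilde{\Xi}^u$ of Definition 5, rather than merely making the two vector fields formally equal---and this is precisely what the inequality (7) is engineered to guarantee.
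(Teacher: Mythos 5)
Your proposal is correct and follows essentially the same route the paper itself takes: the lemma is quoted from \cite{Xiong2009} without a separate proof, but Step~2 of the proof of Theorem~1 reproduces exactly your construction, setting $\tilde{\xi}_i^u(t)=-\Delta_i(\eta_t)\tilde{x}_i(t)$ with $\Delta_i(\mu)=\tilde{K}_i(\mu)-K_i(\mu_i)$, verifying admissibility in $\tilde{\Xi}^u$ via the gain-proximity bound (7), and identifying the closed-loop trajectories of $\mathcal{S}$ and $\tilde{\mathcal{S}}$ sample path by sample path. No gaps noted.
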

The next lemma provides a way to construct local mode dependent controllers using the global ones.
\begin{lemma}[\cite{Xiong2009}]
\label{lemma 2}
Given  global mode dependent controllers (6), the corresponding local mode dependent controllers (3) can be constructed using the following transformation:
\begin{equation}
K_i(\nu_i)=\frac{\sum_{\mu=1}^{M}\left\{\tilde{K}_i(\mu)q_{\infty\mu}\mathbb{I}_i(\mu,\nu_i)\right\}}{\sum_{\mu=1}^M\left\{q_{\infty\mu}\mathbb{I}_i(\mu,\nu_i)\right\}},
\end{equation}
where $ \nu_i\in\mathcal{M}_i$, $i\in\mathcal{N}$, $\mathbb{I}_i(\mu,\nu_i)=1$ if $ \nu_i=\Psi_i^{-1}(\mu)$,and $\mathbb{I}_i(\mu,\nu_i)=0$ otherwise, $q_{\infty\mu}$ is the $\mu$ component of the vector $q_\infty$.  Then $K_i(\nu_i)=\lim_{t \to \infty} \E(\tilde{K}_i(\eta_t)|\eta_{i,t}=\nu_i)$.
\end{lemma}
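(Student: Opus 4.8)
The plan is to verify that the closed-form expression defining $K_i(\nu_i)$ coincides with the limiting conditional expectation $\lim_{t\to\infty}\E(\tilde{K}_i(\eta_t)\mid\eta_{i,t}=\nu_i)$. I would compute that conditional expectation explicitly for finite $t$ and then pass to the limit $t\to\infty$.

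First, since the matrix-valued random variable $\tilde{K}_i(\eta_t)$ takes the value $\tilde{K}_i(\mu)$ on the event $\{\eta_t=\mu\}$, the definition of conditional expectation gives
\[
\E(\tilde{K}_i(\eta_t)\mid\eta_{i,t}=\nu_i)=\sum_{\mu=1}^M \tilde{K}_i(\mu)\,P(\eta_t=\mu\mid\eta_{i,t}=\nu_i).
\]
The key structural fact is that the local mode is a deterministic function of the global mode, namely $\eta_{i,t}=\Psi_i^{-1}(\eta_t)$. Consequently the joint event $\{\eta_t=\mu,\,\eta_{i,t}=\nu_i\}$ equals $\{\eta_t=\mu\}$ when $\Psi_i^{-1}(\mu)=\nu_i$ and is empty otherwise; in terms of the indicator this reads $P(\eta_t=\mu,\,\eta_{i,t}=\nu_i)=P(\eta_t=\mu)\,\mathbb{I}_i(\mu,\nu_i)$. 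Summing over $\mu$ yields the marginal $P(\eta_{i,t}=\nu_i)=\sum_{\mu=1}^M P(\eta_t=\mu)\,\mathbb{I}_i(\mu,\nu_i)$, so that
\[
\E(\tilde{K}_i(\eta_t)\mid\eta_{i,t}=\nu_i)=\frac{\sum_{\mu=1}^M \tilde{K}_i(\mu)\,P(\eta_t=\mu)\,\mathbb{I}_i(\mu,\nu_i)}{\sum_{\mu=1}^M P(\eta_t=\mu)\,\mathbb{I}_i(\mu,\nu_i)}.
\]

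Next I would pass to the limit $t\to\infty$. Because $\eta_t$ is assumed to be a stationary ergodic Markov process, its one-dimensional marginal converges to the unique steady-state distribution, i.e.\ $\lim_{t\to\infty}P(\eta_t=\mu)=q_{\infty\mu}$, where $q_\infty=\textbf{e}(Q+\textbf{E})^{-1}$. Substituting this limit into the numerator and denominator above reproduces exactly the transformation formula defining $K_i(\nu_i)$, which establishes the claimed identity.

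The step I expect to be the main obstacle is not the algebra but the analytic justification surrounding the limit. One must invoke ergodicity to guarantee that a unique stationary distribution exists and that the transient marginal $P(\eta_t=\mu)$ genuinely converges to it, and one must ensure the quotient is well defined in the limit. The latter requires the denominator $\sum_{\mu=1}^M q_{\infty\mu}\,\mathbb{I}_i(\mu,\nu_i)$ to be strictly positive; since an ergodic chain has $q_{\infty\mu}>0$ for every $\mu$, any local mode $\nu_i$ lying in the range of $\Psi_i^{-1}$ is visited with positive steady-state probability, so the conditioning and the division are both legitimate.
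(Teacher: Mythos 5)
Your verification is correct: since $\eta_{i,t}=\Psi_i^{-1}(\eta_t)$ is a deterministic function of the global mode, Bayes' rule gives the conditional expectation as the stated ratio with $P(\eta_t=\mu)$ in place of $q_{\infty\mu}$, and ergodicity of $\eta_t$ (which also guarantees $q_{\infty\mu}>0$, so the denominator is nonzero whenever $\nu_i$ is attained by $\Psi_i^{-1}$) yields the limit. The paper itself offers no proof of this lemma --- it is quoted from \cite{Xiong2009} --- and your computation is essentially the derivation given there, so there is nothing to fault.
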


 The sufficient condition to design the local mode dependent controllers (3) is given by the following theorem.

\begin{theorem}
\label{theorem 1}
Let $\gamma >0$ be a prescribed level of $H_{\infty}$-norm bound, suppose there exist matrices $ X_i(\mu)\in \mathbb{S}^{+} $, $ \bar{X}_i(\mu)\in \mathbb{S}^{+} $, scalars $\bar{\beta}_i(\mu) \in \mathbb{R}^{+}$, $\tilde{\beta}_i(\mu) \in \mathbb{R}^{+}$, $\bar{\tau}_i \in \mathbb{R}^{+}$,  $\bar{\theta}_i \in \mathbb{R}^{+}$, $\bar{\phi}_i \in \mathbb{R}^{+}$, $i\in \mathcal{N}$, $ \mu \in \mathcal{M}$, such that the matrix inequalities with rank constraints (9)-(13) are satisfied
\begin{equation}
\left[
\begin{array}{cccc}
\Upsilon_{i11}(\mu) & \Upsilon_{i12}(\mu) & \Upsilon_{i13}(\mu)\\
\star & \Upsilon_{i22}(\mu) & 0   \\
\star & \star    & \Upsilon_{i33}(\mu)
\end{array}
\right]<0,
\end{equation}
\begin{equation}
\bar{\phi}_i\tilde{R}_i(\mu)-I <0,
\end{equation}
\begin{equation}
\left[
\begin{array}{cc}
 \bar{\phi}_iI & \Delta_i(\mu) \\
\Delta_i^{T}(\mu) & \tilde{\beta}_i(\mu) I
\end{array}
\right] \ge 0,
\end{equation}
\begin{equation}
 \rank\left(
  \begin{bmatrix}
   \bar{X}_i(\mu)  & I\\
    I & X_i(\mu)
  \end{bmatrix}\right) \le n_i,
  \end{equation}
  \begin{equation}
   \rank\left(
 \begin{bmatrix}
\bar{\beta}_i(\mu)  & 1\\
1 & \tilde{\beta}_i(\mu)
\end{bmatrix}\right) \le 1,
 \end{equation}
where
\begin{align*}
& \tilde{R}_i(\mu)=\tilde{D}_i(\mu)^T\tilde{D}_i(\mu),\\
&\Upsilon_{i11}(\mu) =X_i(\mu)\tilde{A}_i^{T}(\mu)+\tilde{A_i}(\mu)X_i(\mu)+\bar{\theta}_i\tilde{E}_i(\mu)\tilde{E}_i^{T}(\mu)  +\bar{\tau}_i\tilde{F}_i(\mu)\tilde{F}_i^{T}(\mu)+\gamma^{-2}\tilde{G}_i(\mu)\tilde{G}_i^{T}(\mu)
\\&\quad \quad \quad \quad -\tilde{B}_i(\mu)\tilde{R}_i^{-1}(\mu)\tilde{B}_i^T(\mu)+q_{\mu \mu}X_i(\mu), \\
&\Upsilon_{i12}(\mu) = X_i(\mu)
                     \left[
                     \tilde{C}_i^{T}(\mu) \quad I \quad \overbrace{\tilde{H}_i^{T}(\mu) \quad \ldots \quad \tilde{H}_i^{T}(\mu)}^N
                     \right], \\
&\Upsilon_{i22}(\mu)= -\diag[I ~~ \bar{\beta}_i I ~~ \bar{\tau}_i I ~~ \bar{\theta}_1 I ~~ \ldots ~~ \bar{\theta}_{i-1}I~~ \bar{\theta}_{i+1}I ~~ \ldots ~~ \bar{\theta}_{N}I],\\
&\Upsilon_{i13}(\mu)=X_i(\mu)\left[\sqrt{q_{\mu,1}}I ~~  \ldots ~~ \sqrt{q_{\mu,\mu-1}}I ~~ \sqrt{q_{\mu,\mu+1}}I ~~ \ldots ~~ \sqrt{q_{\mu,M}}I \right], \\
&\Upsilon_{i33}(\mu)= -\diag[X_i(1)~~ \ldots ~~ X_i(\mu-1)~~ X_i(\mu+1)~~\ldots ~~ X_i(M) ],\\
&\Delta_i(\mu)     =\frac{1}{\sum_{\nu=1}^{M} \left\{ \mathbb{I}_{i}(\nu,\mu_{i})q_{\infty\nu}
      \right\}}\sum_{\nu=1}^{M}\left\{\mathbb{I}_i(\nu,\mu_i)q   _{\infty\nu}
\left[\tilde{R}_i^{-1}(\nu)\tilde{B}_i^T(\nu)\bar{X}_i(\nu) -\tilde{R}_i^{-1}(\mu)\tilde{B}_i^T(\mu)\bar{X}_i(\mu)
 \right] \right\}.
\end{align*}

Then the global mode dependent control gains $\tilde{K}_i(\mu)$ are given by
\begin{align}
\tilde{K}_i(\mu)=-\tilde{R}_i(\mu)^{-1}\tilde{B}_i^T(\mu)\bar{X}_i(\mu).
\end{align}

Moreover,  the local mode dependent controllers (3) can be constructed using (8) so that the uncertain system large-scale system $\mathcal{S}$ subject to the IQCs (1)~(2) is mean square stable and satisfies the $H_\infty$-performance (4).
\end{theorem}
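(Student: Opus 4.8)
The plan is to prove the theorem in two movements. First I would show that the global mode dependent controllers (6) with gains (14) solve the robust $H_\infty$ synthesis problem for the auxiliary system $\tilde{\mathcal{S}}$ uniformly over all admissible uncertainties (1), (2), (5); second I would transfer this property to the original system $\mathcal{S}$. The transfer rests on a realization identity: applying the local controller (3) to $\mathcal{S}$ yields exactly the same closed loop as applying the global controller (6) to $\tilde{\mathcal{S}}$ with the particular control-input uncertainty $\tilde{\xi}_i^u(t)=(K_i(\mu_i)-\tilde{K}_i(\mu))\tilde{x}_i(t)$, because $\tilde{A}_i(\mu)=A_i(\mu_i)$ and $\mu_i=\eta_{i,t}$ whenever $\eta_t=\mu$. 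Hence, once Stage~1 guarantees that $\tilde{\mathcal{S}}$ is mean square stable with $\|\tilde{\mathcal{S}}\|_\infty<\gamma$ for \emph{every} $\tilde{\xi}_i^u\in\tilde{\Xi}^u$, both mean square stability and the performance (4) for $\mathcal{S}$ follow, provided this specific $\tilde{\xi}_i^u$ lies in $\tilde{\Xi}^u$, i.e. provided it satisfies the IQC (5). This is precisely where condition (7) of Lemma~\ref{lemma 1} enters (Lemma~\ref{lemma 1} itself is the $w\equiv0$ instance of this argument), and the final block of the proof is devoted to extracting (7) from the matrix inequalities.

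For Stage~1 I would take the stochastic Lyapunov function $V(\tilde{x},\mu)=\sum_{i=1}^N\tilde{x}_i^T P_i(\mu)\tilde{x}_i$ and require $\mathcal{L}V+\sum_i\|\tilde{z}_i\|^2-\gamma^2\sum_i\|\tilde{w}_i\|^2\le0$ along the closed loop, where $\mathcal{L}$ is the infinitesimal generator. I would then apply the S-procedure, absorbing the local IQC (1), the interconnection IQC (2) and the control IQC (5) with nonnegative multipliers that turn out to be $\bar{\tau}_i^{-1}$, $\bar{\theta}_i^{-1}$ and $\bar{\phi}_i^{-1}$ respectively; strictness of (9) with $\tilde{w}\equiv0$ then gives $\mathcal{L}V<0$ and hence mean square stability. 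The decisive structural step is decentralization: regrouping the interconnection contribution $\sum_i\bar{\theta}_i^{-1}\sum_{j\neq i}\|\tilde{\zeta}_j\|^2$ by the producing subsystem assigns to $\tilde{\zeta}_j=\tilde{H}_j\tilde{x}_j$ the coefficient $\sum_{i\neq j}\bar{\theta}_i^{-1}$, which is exactly what the $N$-fold replication of $\tilde{H}_i^T$ in $\Upsilon_{i12}$ and the spread of the $\bar{\theta}_i$ blocks through $\Upsilon_{i22}$ encode; this lets the single coupled large-scale inequality split into $N$ per-subsystem conditions. Substituting (14), using Assumption~1 (so that $\tilde{C}_i^T\tilde{D}_i=0$ and $\tilde{D}_i^T\tilde{D}_i=\tilde{R}_i$) to cancel the $\tilde{\xi}_i^u$ cross terms and collapse the control contribution to $-\tilde{B}_i\tilde{R}_i^{-1}\tilde{B}_i^T$, then performing the congruence transformation by $\diag(X_i,I,\ldots,I)$ with $X_i(\mu)=P_i^{-1}(\mu)$ and taking Schur complements on the disturbance, uncertainty and jump channels, yields (9); the off-diagonal jump terms produce $\Upsilon_{i13}$ and $\Upsilon_{i33}$, while condition (10) is what legitimizes the completion of squares in $\tilde{\xi}_i^u$ by forcing the residual coefficient $\tilde{R}_i-\bar{\phi}_i^{-1}I$ to be negative.

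The rank constraints tie the two parameterizations together. Constraint (12) forces $\bar{X}_i(\mu)=X_i^{-1}(\mu)=P_i(\mu)$, making the gain (14) consistent with the Lyapunov matrix of Stage~1 and letting $X_i(\nu)^{-1}$ in $\Upsilon_{i33}$ be read as $\bar{X}_i(\nu)$; constraint (13) forces $\tilde{\beta}_i(\mu)=\bar{\beta}_i^{-1}(\mu)$. With these identifications the $I$/$\bar{\beta}_iI$ block of (9) encodes the state penalty of IQC (5) with effective parameter $\beta_i^u(\mu)=\bar{\phi}_i\tilde{\beta}_i(\mu)=\bar{\phi}_i/\bar{\beta}_i(\mu)$. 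A short computation shows $\Delta_i(\mu)=\tilde{K}_i(\mu)-K_i(\mu_i)$, where $K_i$ is the local gain built by the transformation (8) of Lemma~\ref{lemma 2}: substituting (14) into the definition of $\Delta_i(\mu)$ and using (8) telescopes the sum to exactly this difference. A Schur complement of (11) combined with (13) then gives $\|\Delta_i(\mu)\|^2\le\bar{\phi}_i\tilde{\beta}_i(\mu)=\beta_i^u(\mu)$, which is precisely (7). Lemma~\ref{lemma 1} together with the realization identity of the first paragraph now delivers the conclusion.

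The hardest part will be the decentralization step of Stage~1: verifying that the coupled quadratic form assembled from $\mathcal{L}V$, the performance index, and the three families of IQCs can be reorganized into a block-diagonal sum over subsystems, so that negativity of each individual inequality (9) implies negativity of the whole. This hinges on the bookkeeping of which copy of $\tilde{\zeta}_i$ is charged to which subsystem's interconnection constraint coming out exactly aligned with the $\bar{\theta}$-multiplier pattern; any mismatch would recouple the subsystems and destroy the decentralized structure. A secondary difficulty is sustaining the two nonconvex identifications $\bar{X}_i=X_i^{-1}$ and $\tilde{\beta}_i=\bar{\beta}_i^{-1}$ simultaneously, since both the derivation of (9) and the gain-perturbation bound (7) rely on them, and it is the rank constraints (12), (13)---not convex inequalities---that supply them.
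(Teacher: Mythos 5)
Your proposal is correct and follows essentially the same route as the paper: Schur complements and the rank constraints turn (9)--(13) into the closed-loop dissipation inequality with multipliers $\tau_i=\bar\tau_i^{-1}$, $\theta_i=\bar\theta_i^{-1}$, $\phi_i=\bar\phi_i^{-1}$; the identity $\Delta_i(\mu)=\tilde K_i(\mu)-K_i(\mu_i)$ plus the Schur complement of (11) yields the gain-perturbation bound (7); and the performance transfers to $\mathcal{S}$ via the particular admissible input $\tilde\xi_i^u=-\Delta_i(\eta_t)\tilde x_i$. The only cosmetic difference is that the paper packages your Lyapunov/S-procedure step as the bounded real lemma applied to an augmented system $\bar{\mathcal{S}}$ with scaled outputs $\bar z_i$ and inputs $\bar w_i$, which is the same computation.
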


 \begin{proof}
\label{proof of theorem 1}
The proof is divided into two steps. In step 1, we show that the global mode dependent controllers defined in (6) with the form of (14) stabilize the auxiliary system  $\tilde{\mathcal{S}}$ subject to the IQCs (1), (2), (5) and satisfy $H_{\infty}$ performance (4). In step 2, we  show that  the local mode dependent controllers  (3) constructed by (8)  stabilize the system $\mathcal{S}$ subject to the IQCs (1)~(2) and satisfy $H_{\infty}$ performance (4).

Step 1

Using Schur complement equivalence, it follows from inequality (9), $$\Upsilon_{i11}-\Upsilon_{i12}\Upsilon_{i22}^{-1}\Upsilon_{i12}^{T}
-\Upsilon_{i13}\Upsilon_{i33}^{-1}\Upsilon_{i13}^{T}<0 . $$
 That is,
\begin{IEEEeqnarray}{rl}
& \quad X_i(\mu)\tilde{A}_i^T(\mu)+\tilde{A}_i(\mu) X_i(\mu)+\sum_{\nu=1}^M q_{\mu\nu}X_i(\mu)X_i(\nu)^{-1}X_i(\mu)+\bar{\theta}_i\tilde{E}_i(\mu)\tilde{E}_i^T(\mu)
\nonumber\\&+\:
X_i(\mu)\left[\tilde{C}_i^T(\mu)\tilde{C}_i(\mu)+\bar{\beta}_i^{-1}(\mu)I+(\bar{\tau}_i^{-1}+\sum_{j=1,j\neq i}^N \bar{\theta}_j^{-1}) \tilde{H}_i^T(\mu)\tilde{H}_i(\mu) \right]X_i(\mu)
\nonumber\\&+\:
\bar{\tau}_i\tilde{F}_i(\mu)\tilde{F}_i^T(\mu)+ \gamma^{-2}\tilde{G}_i(\mu)\tilde{G}_i^T(\mu)-\tilde{B}_i(\mu)\tilde{R}_i^{-1}(\mu)\tilde{B}_i^T(\mu)
< 0.
\end{IEEEeqnarray}

The rank constraints (12) and (13), together with $X_i(\mu)>0$, $\bar{X}_i(\mu)>0$  and $\bar{\beta}_i(\mu)>0$, $\tilde{\beta}_i(\mu)>0$, are  equivalent to
 $$ \bar{X}_i(\mu)=(X_i(\mu))^{-1}, \quad \bar{\beta}_i(\mu)=(\tilde{\beta}_i(\mu))^{-1}.$$

Let $\tau_i=\bar{\tau}_i^{-1}$, $\theta_i=\bar{\theta}_i^{-1}$, $\phi_i=\bar{\phi}_i^{-1}$, $\beta_i^u(\mu)=\bar \phi_i \tilde{\beta}_i(\mu)$ and $\hat{\theta}_i=\sum_{j=1,j\neq i}^N  \theta_j$. Pre-multiplying and post-multiplying (15) by $\bar{X}_i(\mu)$, with $\tilde{K}_i(\mu)$  defined as the form of (14),  we can get:

\begin{IEEEeqnarray}{rl}
&\quad\left[\tilde{A}_i(\mu)+\tilde{B}_i(\mu)\tilde{K}_i(\mu)\right]^T\bar{X}_i(\mu)+\bar{X}_i(\mu)\left[\tilde{A}_i(\mu)+\tilde{B}_i(\mu)\tilde{K}_i(\mu)\right]
\nonumber\\&+\:\tilde{C}_i^T(\mu)\tilde{C}_i(\mu)+\phi_i\beta_i^u(\mu)I+\sum_{\nu=1}^Mq_{\mu\nu}\bar{X}_i(\nu)+(\tau_i+\hat{\theta}_i)\tilde{H}_i^T(\mu)\tilde{H}_i(\mu) \nonumber\\&+\:\bar{X}_i(\mu)\left[\theta_i^{-1}\tilde{E}_i(\mu)\tilde{E}_i^T(\mu)+\tau_i^{-1}\tilde{F}_i(\mu)\tilde{F}_i^T(\mu)+\gamma^{-2}\tilde{G}_i(\mu)\tilde{G}_i^T(\mu)\right]\bar{X}_i(\mu)
\nonumber\\&+\:\tilde{K}_i^T(\mu)\tilde{R}_i(\mu)\tilde{K}_i(\mu)-\left[\bar{X}_i(\mu)\tilde{B}_i(\mu)
+\tilde{K}_i^T(\mu)\tilde{R}_i(\mu)\right]\left[-\phi_i I+\tilde{R}_i(\mu)\right]^{-1}
\nonumber\\& \times\: \left[\bar{X}_i(\mu)\tilde{B}_i(\mu)+\tilde{K}_i^T(\mu)\tilde{R}_i(\mu)\right]^T
<0.
\end{IEEEeqnarray}
%\begin{IEEEeqnarray}{rCl}
%a & = & b + c + d + e + f
%+ g + h + i + j \nonumber\\
%&& +\: k + l + m + n + o + p
%\label{eq:dont_use_multline}
%\end{IEEEeqnarray}

In the derivation of (16), we have used the following equations:
\begin{IEEEeqnarray}{rl}
&\quad \left[\tilde{B}_i(\mu)\tilde{K}_i(\mu)\right]^T\bar{X}_i(\mu)+\bar{X}_i(\mu)\left[\tilde{B}_i(\mu)\tilde{K}_i(\mu)\right]+\tilde{K}_i^T(\mu)\tilde{R}_i(\mu)\tilde{K}_i(\mu) \nonumber\\&=\:-\bar{X}_i(\mu)\tilde{B}_i(\mu)\tilde{R}_i^{-1}(\mu)\tilde{B}_i^T(\mu)\bar{X}_i(\mu),\nonumber
\end{IEEEeqnarray}
and
$$\bar{X}_i(\mu)\tilde{B}_i(\mu)+\tilde{K}_i^T(\mu)\tilde{R}_i(\mu)=0.$$

With the inequality (10)  and according to Schur complement equivalence, the inequality (16) can be transformed to
\begin{equation}
\left[
\begin{array}{cccc}
\hat{\Upsilon}_{i11}(\mu) & \hat{\Upsilon}_{i12}(\mu) & \hat{\Upsilon}_{i13}(\mu)\\
\star & \hat{\Upsilon}_{i22}(\mu) & 0   \\
\star & \star & -I
\end{array}
\right]<0,
\end{equation}
where
\begin{align*}
\hat{\Upsilon}_{i11}(\mu)&=\left[\tilde{A}_i(\mu)+\tilde{B}_i(\mu)\tilde{K}_i(\mu)\right]^T\bar{X}_i(\mu)+\bar{X}_i(\mu)
   \left[\tilde{A}_i(\mu)+\tilde{B}_i(\mu)\tilde{K}_i(\mu)\right]+\sum_{\nu=1}^Mq_{\mu\nu}\bar{X}_i(\nu)
   \\&+\tilde{C}_i^T(\mu)\tilde{C}_i(\mu)
   +\phi_i\beta_i^u(\mu)I+\tilde{K}_i^T(\mu)\tilde{R}_i(\mu)\tilde{K}_i(\mu)+(\tau_i+\hat{\theta}_i)\tilde{H}^T_i(\mu)\tilde{H}_i(\mu),\\
\hat{\Upsilon}_{i12}(\mu)&= \phi_i^{-1/2}\bar{X}_i(\mu)\tilde{B}_i+\phi_i^{-1/2}\tilde{K}_i^T(\mu)\tilde{R}_i(\mu),\\
\hat{\Upsilon}_{i22}(\mu)&= - I+\phi_i^{-1}\tilde{R}_i(\mu),\\
\hat{\Upsilon}_{i13}(\mu)&=\bar{X}_i(\mu)\begin{bmatrix}
\theta_i^{-1/2}\tilde{E}_i(\mu) & \tau_i^{-1/2}\tilde{F}_i(\mu) & \gamma^{-1}\tilde{G}_i(\mu)
\end{bmatrix}.
\end{align*}

Using Schur complement equivalence again and Assumption 1, we get
\begin{equation}
\left[
\begin{array}{cccc}
\bar{\Upsilon}_{i11}(\mu) & \bar{\Upsilon}_{i12}(\mu) & \bar{\Upsilon}_{i13}(\mu)\\
\star & \bar{\Upsilon}_{i22}(\mu) & \bar{\Upsilon}_{i23}(\mu)   \\
\star & \star & -I
\end{array}
\right]<0,
\end{equation}
where
\begin{align*}
\bar{\Upsilon}_{i11}(\mu)&=\left[\tilde{A}_i(\mu)+\tilde{B}_i(\mu)\tilde{K}_i(\mu)\right]^T\bar{X}_i(\mu)+\bar{X}_i(\mu)\left[\tilde{A}_i(\mu)+\tilde{B}_i(\mu)\tilde{K}_i(\mu)\right]+\sum_{\nu=1}^Mq_{\mu\nu}\bar{X}_i(\nu),\\
\bar{\Upsilon}_{i12}(\mu)&= \bar{X}_i(\mu)\left[\phi_i^{-1/2}\tilde{B}_i(\mu) \quad\theta_i^{-1/2}\tilde{E}_i(\mu) \quad \tau_i^{-1/2}\tilde{F}_i(\mu) \quad  \gamma^{-1}\tilde{G}_i(\mu)\right],\\
\bar{\Upsilon}_{i22}(\mu)&= - I,\\
\bar{\Upsilon}_{i13}(\mu)&=\begin{bmatrix}
\tilde{C}_i(\mu)+\tilde{D}_i(\mu)\tilde{K}_i(\mu) \\ (\tau_i+\hat{\theta}_i)^{1/2}\tilde{H}_i(\mu) \\ [\phi_i\beta_i^u(\mu)]^{1/2}I
\end{bmatrix}^T,\\
\bar{\Upsilon}_{i23}(\mu)&=\begin{bmatrix}
\phi_i^{-1/2}\tilde{D}_i^T(\mu) & 0& 0\\
0 & 0 &  0 \\
0  & 0  & 0  \\
0  & 0  & 0
\end{bmatrix}.
\end{align*}

Let us define
  \begin{align*}
&\bar{w}_i(t)\triangleq\left[\phi_i^{1/2}\tilde{\xi}_i^u(t)^T\quad\theta_i^{1/2}\tilde{r}_i(t)^T\quad\tau_i^{1/2}\tilde{\xi}_i(t)^T\quad\gamma\tilde{w}_i(t)^T\right]^T,\\
&\bar{z_i}(t)\triangleq\left[\tilde{z}_i(t)^T\quad(\tau_i+\hat{\theta}_i)^{1/2}\tilde{\zeta}_i(t)^T\quad[\phi_i\beta_i^u(\eta_t)]^{1/2}\tilde{x}_i(t)^T\right]^T,
\end{align*}
and  define $\bar{w}=[\bar{w}_1^T,\ldots ,\bar{w}_N^T]^T$ as the disturbance input, and $\bar{z}=[\bar{z}_1^T,\ldots ,\bar{z}_N^T]^T$ as the output, so the auxiliary system $\tilde{\mathcal{S}}$ with augmented outputs can be rewritten as the following system  $\bar{\mathcal{S}}$
\begin{equation*}
\bar{\mathcal{S}}:\left\{
\begin{array}{l}
\dot{\tilde{x}}(t)=A(\eta_t)\tilde{x}(t)+B(\eta_t)\tilde{u}(t)+B_2(\eta_t)\bar{w}(t),\\
\bar{z}(t)=C(\eta_t)\tilde{x}(t)+D(\eta_t)\tilde{u}(t)+D_2(\eta_t)\bar{w}(t),
\end{array}
\right.
\end{equation*}
where
\begin{align*}
A(\mu)&=\diag\left[\tilde{A}_1(\mu), \ldots ,\tilde{A}_N(\mu)\right] ,
 & B(\mu)&=\diag\left[\tilde{B}_1(\mu),\ldots,\tilde{B}_{N}(\mu)\right],
& B_2(\mu)&=\diag\left[\bar{B}_{2,1}(\mu),\ldots ,\bar{B}_{2,N}(\mu)\right],\\
C(\mu)&=\diag\left[\bar{C}_1(\mu), \ldots , \bar{C}_N(\mu)\right],
&D(\mu)&=\diag\left[\bar{D}_1(\mu),\ldots,\bar{D}_N(\mu)\right],
&D_2(\mu)&=\diag\left[\bar{D}_{2,1}(\mu),\ldots,\bar{D}_{2,N}(\mu)\right],\\
\tilde{x}&=\left[\tilde{x}_1^T,\ldots ,\tilde{x}_N^T\right]^T,\quad &\tilde{u}&=\left[\tilde{u}_1^T,\ldots ,\tilde{u}_N^T\right]^T,
\end{align*}
and the matrix coefficients are:
\begin{align*}
& \bar{B}_{2,i}(\mu)=\begin{bmatrix}
\phi_{i}^{-1/2}\tilde{B}_i(\mu)
& \theta_i^{-1/2}\tilde{E}_i(\mu)
 &\tau_i^{-1/2}\tilde{F}_i(\mu)
 &\gamma^{-1}\tilde{G}_i(\mu)
 \end{bmatrix},
 \\
&\bar{C}_{i}(\mu)=
\begin{bmatrix}
\tilde{C}_{i}(\mu)\\
(\tau_i+\hat{\theta}_i)^{1/2}\tilde{H}_i(\mu)\\
 [\phi_i\beta_i^{u}(\mu)]^{1/2}I
\end{bmatrix},
\\
&\bar{D}_{i}(\mu)=
\begin{bmatrix}
\tilde{D}_i(\mu)\\
 0\\
 0
\end{bmatrix},\\
&\bar{D}_{2,i}(\mu)=
\begin{bmatrix}
\phi_i^{-1/2}\tilde{D}_i(\mu) &0 &0 &0 \\
 0 & 0 & 0& 0\\
 0 &0 &0 & 0
\end{bmatrix}.
 \end{align*}

 Note that the auxiliary system  $\tilde{\mathcal{S}}$ and the system $\bar{\mathcal{S}}$ have the same stability property.

 Then the inequality (18) can be rewritten as
 \begin{align*}
\begin{bmatrix}
   \bar{\Upsilon}_{i11}(\mu) & \bar{X}_i(\mu)\bar B_{2i}(\mu) & [\bar{C}_i(\mu)+\bar{D}_i(\mu)\tilde{K}_i(\mu)]^T\\
   \star & -I & \bar D_{2i}^T(\mu)\\
   \star   & \star & -I
  \end{bmatrix} <0,\quad i=1,2,\ldots,N.
\end{align*}

 With the positive definite matrices $\bar{X}(\mu)$ defined as $\diag[\bar{X}_1(\mu),\ldots,\bar{X}_N(\mu)]$ and the controller  $\tilde{K}(\mu)$ defined as $\diag[\tilde{K}_1(\mu),\ldots,\tilde{K}_N(\mu)]$, we have
\begin{align*}
\begin{bmatrix}
    \bar{\Upsilon}_{11}(\mu) & \bar{X}(\mu)B_2(\mu) & [C(\mu)+D(\mu)\tilde{K}(\mu)]^T\\
    \star & -I & D_2^T(\mu)\\
   \star & \star & -I
  \end{bmatrix} <0,
\end{align*}
where $\bar{\Upsilon}_{11}(\mu)=\diag[\bar{\Upsilon}_{111}(\mu),\ldots,\bar{\Upsilon}_{N11}(\mu)]$.

According to  Lemma 1, the system $\bar{\mathcal{S}}$ is mean square stable and $\|\bar{\mathcal{S}}\|_{\infty}<1$. This implies
\begin{align*}
&\sup_{\substack{  \bar{w}(\cdot)\in L_2[0,\infty)\\\|\bar{w}(\cdot)\|\neq 0,~ x_0=0}}\frac{\left\{\E(\int_0^{\infty}\|\bar{z}(t)\|^2)dt\right\}^{1/2}}{\left\{\int_0^{\infty}\|\bar{w}(t)\|^2dt\right\}^{1/2}} =
\sup_{\substack{  \bar{w}(\cdot)\in L_2[0,\infty)\\\|\bar{w}(\cdot)\|\neq 0,~ x_0=0}}\frac{\left\{\E\left[\int_0^{\infty}(\sum_{i=1}^N\|\bar{z}_i(t)\|^2)dt\right]\right\}^{1/2}}{\left\{\int_0^{\infty}(\sum_{i=1}^N\|\bar{w}_i(t)\|^2)dt\right\}^{1/2}}
<1.
\end{align*}

Notice  the definition of  $\bar{z}_i$ and  $\bar{w}_i$  , we  get
\begin{multline}
\E\left\{\int_0^{\infty}\sum_{i=1}^N \left[ \|\tilde z_i(t)\|^2+(\tau_i +\hat{\theta}_i)\|\tilde\zeta_i(t)\|^2 +\phi_i \beta_i^u(\eta_t)\|\tilde{x}_i(t)\|^2 \right] dt\right\}\\
\quad\quad -\E\left\{\int_0^{\infty}\sum_{i=1}^N\left[\gamma^2\| \tilde w_i(t)\|^2+\tau_i\|\tilde\xi_i(t)\|^2 +\theta_i\|\tilde r_i(t)\|^2+\phi_i\|\tilde \xi_i^u(t)\|^2\right]dt \right\}\nonumber\\
=\E\left\{\int_0^{\infty}\sum_{i=1}^N\left[(\|\tilde z_i(t)\|^2-\gamma^2\| \tilde w_i(t)\|^2)+\tau_i(\|\tilde\zeta_i(t)\|^2-\|\tilde\xi_i(t)\|^2)\right.\right.
 \\ \quad\quad \left.+\theta_i(\sum^{N}_{j=1,j\neq i}\left. \|\tilde\zeta_j(t)\|^2-\|\tilde r_i(t)\|^2)+\phi_i(\beta_i^u(\eta_t)\|\tilde{x}_i(t)\|^2- \|\tilde \xi_i^u(t)\|^2)\right]dt\right\}
\nonumber< 0.
\end{multline}

According to (1), (2) and (5)  we  have
\begin{align}
\E\left\{\int_0^{\infty}\sum_{i=1}^N \left[ \|\tilde z_i(t)\|^2-\gamma^2\| \tilde w_i(t)\|^2 \right] dt\right\}
<
\sum_{i=1}^N\left(\tau_i x_{i0}^T\bar{S}_i x_{i0}+\theta_i x_{i0}^T\tilde{S}_i x_{i0}\right).
\end{align}

When $x_0=0$, the right part of the inequality (19) is equal to zero, it means:
\begin{equation*}
\sup_{\substack{\tilde{w}(\cdot) \in L_2[0,\infty), \tilde{\xi}(\cdot)\in \Xi  \\ \|\tilde{w}(\cdot)\| \neq 0 , \tilde{r}(\cdot) \in \Pi, \tilde{\xi}^u(\cdot) \in \tilde{\Xi}^u }} \frac{\left\{\E \left[\int_0^{\infty}(\sum_{i=1}^{N}\|\tilde z_i(t)\|^2 )dt \right]\right\}^{1/2}}{\left\{\int_0^{\infty}(\sum_{i=1}^{N}\|\tilde w_i(t)\|^2) dt\right\}^{1/2}}<\gamma.
\end{equation*}

So the controllers (6) with the gains given by (14) stabilize system  $\tilde{\mathcal{S}}$  and satisfy $H_{\infty}$-performance (4).

Step 2

\label{step 2}
In the LMI (11), it can be obtained that
\begin{align*}
\Delta_i(\mu)&=\frac
{\sum_{\nu=1}^{M}\left\{\mathbb{I}_i(\nu,\mu_i)q_{\infty\nu}
   [\tilde{K}_i(\mu)-\tilde{K}_i(\nu)]
\right\}}{\sum_{\nu=1}^{M} \left\{ \mathbb{I}_{i}(\nu,\mu_{i})q_{\infty\nu}
      \right\}}=\tilde{K}_i(\mu)-K_i(\mu_i),
\end{align*}
and
\begin{equation*}
\|\Delta_i(\mu)\|^2=\|\tilde{K}_i(\mu)-K_i(\mu_i)\|^2\leq \bar{\phi_i}\tilde{\beta_i}(\mu)=\beta_i^u(\mu).
\end{equation*}

According to Lemma 3, the constructed controller (3) stabilizes the uncertain system $\mathcal{S}$ subject to the IQCs (1) and (2).

As for the $H_\infty$ performance of system $\mathcal{S}$,   a particular uncertainty input of the form $\tilde{\xi}_i^u(t)$ is chosen to be $$\tilde{\xi}_i^u(t)=-\Delta_i(\eta_t)\tilde{x_i}(t).$$

Then we get
$$\|\tilde{\xi}_i^u(t)\|^2=\|\Delta_i(\eta_t)\tilde{x_i}(t)\|^2\leq \beta_i^u(\eta_t)\|\tilde{x_i}(t)\|^2,$$
 which means $\tilde{\xi}_i^u(t)$ is admissible uncertainty input for $\tilde{\mathcal{S}}$ according to Definition 5, and we have
$$ \tilde{u}_i(t)+\tilde{\zeta}_i^u(t)=\left[\tilde{K}_i(\eta_t)-\Delta_i(\eta_t)\right]\tilde{x_i}(t)=K_i(\eta_{i,t})\tilde{x}_i(t).$$

 If the same disturbances and sample path are applied to system $\mathcal{S}$ and system  $\tilde{\mathcal{S}}$ we can get $\tilde{x}_i(t)=x_i(t)$.

The controlled output of the closed-loop auxiliary large-scale system  $\tilde{\mathcal{S}}$ is
$$\tilde{z}_i(t)=[\tilde{C}_i(\eta_t)+\tilde{D}_i(\eta_t)K_i(\eta_{i,t})]\tilde{x}_i(t),$$
and the controlled output of the closed-loop system  $\mathcal{S}$ is
$$z_i(t)=[C_i(\eta_{i,t})+D_i(\eta_{i,t})K_i(\eta_{i,t})]x_i(t),$$
with $\tilde{C}_i(\eta_t)=C_i(\eta_{i,t})$, $\tilde{D}_i(\eta_t)=D_i(\eta_{i,t})$, so we  get $\|\tilde{z}_i(t)\|^2=\|z_i(t)\|^2$.
Thus  for the system  $\mathcal{S}$, it is satisfied that

\begin{equation*}
\sup_{\substack{w(\cdot) \in L_2[0,\infty),\xi(\cdot)\in \Xi  \\ \|w(\cdot)\| \neq 0 , r(\cdot) \in \Pi}} \frac{\left\{\E \left[ \int_0^{\infty}(\sum_{i=1}^{N}\|z_i(t)\|^2) dt\right]\right\}^{1/2}}{\left\{\int_0^{\infty}(\sum_{i=1}^{N}\|w_i(t)\|^2) dt\right\}^{1/2}}<\gamma.
\end{equation*}

It means the local mode dependent controllers (3)  stabilize the uncertain large-scale system $\mathcal{S}$ subject to the IQCs (1)  and (2) with $H_{\infty}$-performance (4).

\end{proof}
\begin{remark}
Due to the existence of the rank constraints (12) and (13), the solution set to (9)-(13) is non-convex, bringing difficulties to find numerical solutions. Actually,
LMIs with rank constraints have appeared widely in control area~\cite{Xiong2009,li20072,Xiong2010}, and there have been a lot of efforts to deal with this kind of problems, see ~\cite{Iwasaki1995,Ghaoui1997,Orsi2006} and the reference therein. We can use the rank constrained LMI solver LMIRank~\cite{Orsi} to solve the conditions in Theorem 1.
\end{remark}
\section{Numerical example}
We consider a large-scale system, which has three subsystems and each subsystem has two modes. The interconnection of this large-scale system can be seen in Fig. 2 (the structure of each subsystem can be seen in Fig. 1).
\begin{center}
\begin{picture}(100,155)
\put(-200,-330){\resizebox{150mm}{200mm}{\includegraphics[width=0.5\textwidth]{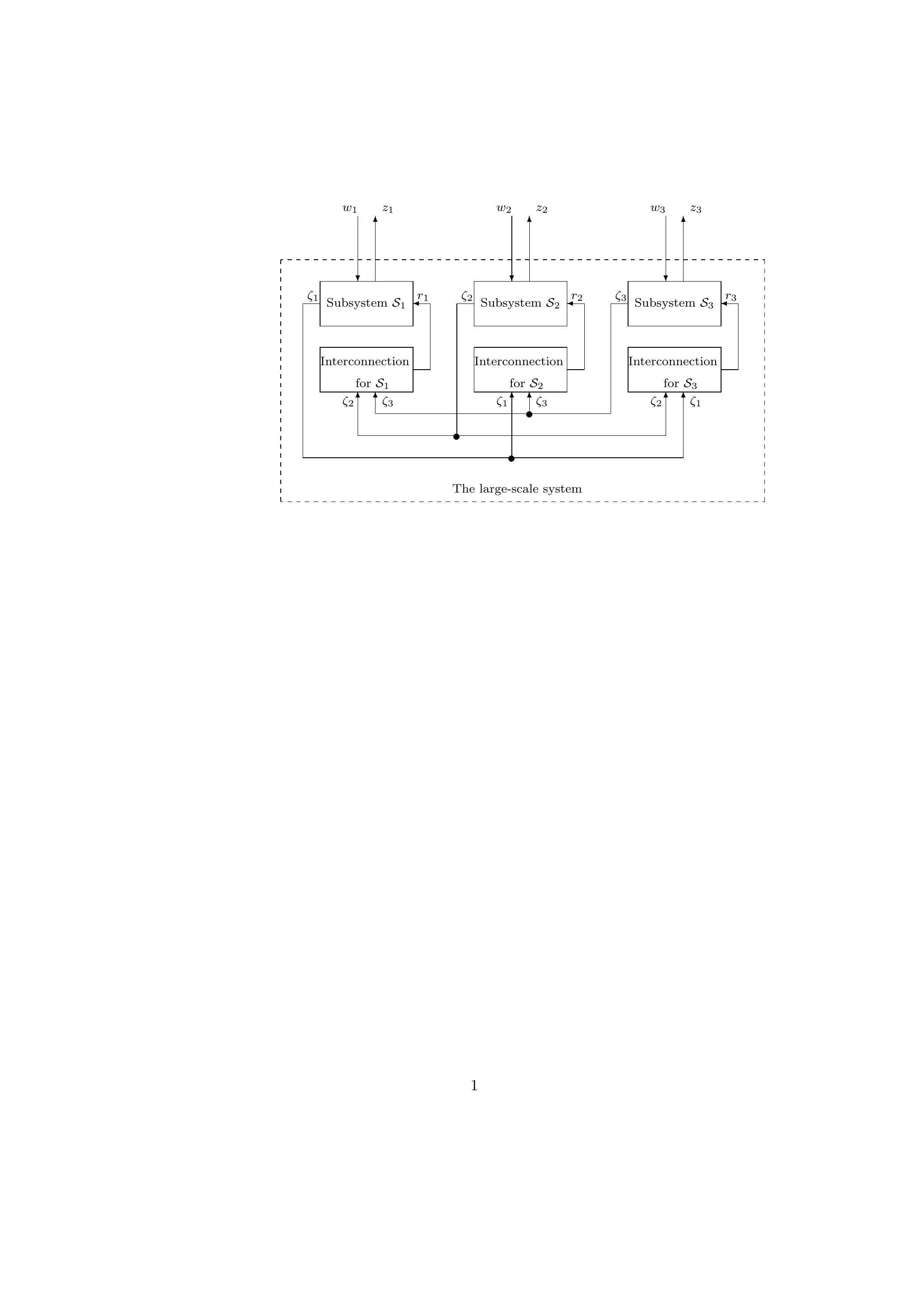}}}
%title
\put(-50,-10){\shortstack{ Fig. 2. The structure of the system for example. }}
\end{picture}
\end{center}

The initial condition of the system is assumed to be $x_{10}=[5,-5]^T$, $x_{20}=[3,-3]^T$, $x_{30}=[1,-1]^T$, and  $\eta_{10}=\eta_{20}=\eta_{30}=1$. The  state transition rate matrix of the global modes is given by:
\begin{align*}
Q=
\begin{bmatrix}
-0.35 & 0.2 &0.1 & 0.05\\
0.5 & -2.4 & 0.7 & 1.2\\
0.4 & 0.3 & -1.45 & 0.75\\
0.1 & 0.2 & 0.3 & -0.6
\end{bmatrix}.
\end{align*}

We assume there are constraints on the operation modes of the subsystems: the operation mode of subsystem $\mathcal{S}_1 $ is assumed to depend on the operation modes of subsystems $\mathcal{S}_2 $  and  $\mathcal{S}_3 $. Explicitly, $\eta_1(t)=1$ if $\eta_2(t)=\eta_3(t)$, and $\eta_1(t)=2$ otherwise. Therefore, the operation mode pattern set $\mathcal{M}_p$ is given by $\set{(1,1,1),(1,2,2),(2,1,2),(2,2,1)}$, so  $\mathcal{M}=\set{1,2,3,4}$.
The relationship between the local operation modes $(\nu_1,\nu_2,\nu_3)$ and the global operation modes $\mu$ can be described by Table 1.
\begin{table}[h]
\caption{Relationship between local operation modes and global ones}
\centering
\begin{tabular}{c|c}
\hline
\hline
$(\nu_1,\nu_2,\nu_3)$ & $\mu$\\
\hline
(1,1,1) &1\\
\hline
(1,2,2) &2\\
\hline
(2,1,2) &3\\
\hline
(2,2,1) &4\\
\hline
\hline
\end{tabular}
\end{table}

The system data  are chosen as follows:
\begin{align*}
 A_1(1)&=
\begin{bmatrix}
1 & 0\\
1 & 1.4
\end{bmatrix},
 &A_1(2)&=
\begin{bmatrix}
1 & 0\\
1 & -2.4
\end{bmatrix},
& A_2(1)&=
\begin{bmatrix}
1 & 0\\
1 & 1.2
\end{bmatrix},
  &A_2(2)&=
\begin{bmatrix}
1 & 0\\
1 & -3.2
\end{bmatrix},\\
 A_3(1)&=
\begin{bmatrix}
1 & 0\\
1 & 1.2
\end{bmatrix},
& A_3(2)&=\begin{bmatrix}
1 & 0\\
1 & -1.2
\end{bmatrix},
& B_1(1)&=\begin{bmatrix}
 1.9\\
 -1.4
 \end{bmatrix},
  & B_1(2)&=\begin{bmatrix}
  1.6\\
  -1.8\end{bmatrix},\\
B_2(1)&=
 \begin{bmatrix}
 1.8\\
 1.7\end{bmatrix},
  &B_2(2)&=
  \begin{bmatrix}
  1.5\\
  -1.7
  \end{bmatrix},
& B_3(1)&=
 \begin{bmatrix}
 1.5\\
 1.7
 \end{bmatrix},
    &  B_3(2)&=
    \begin{bmatrix}
    1.4\\
    1.7
    \end{bmatrix},\\
 C_1(1)&=
 \begin{bmatrix}
 0.6 & 0.4\\
 0& 0
 \end{bmatrix},
 & C_1(2)&=
 \begin{bmatrix}
 0.6 & -0.4\\
 0 & 0
 \end{bmatrix},
& C_2(1)&=
 \begin{bmatrix}
 -0.7 & 0.5\\
 0& 0
 \end{bmatrix},
   &C_2(2)&=
  \begin{bmatrix}
  0.6& 0.4\\
  0&    0
  \end{bmatrix},\\
 C_3(1)&=
 \begin{bmatrix}-0.7 &  0.5\\
 0& 0
 \end{bmatrix},
  & C_3(2)&=
  \begin{bmatrix}
  0.6& 0.4\\
  0&    0
  \end{bmatrix},
 &D_1(1)&=  \begin{bmatrix} 0\\-0.3 \end{bmatrix},
   & D_1(2)&= \begin{bmatrix}0\\0.3\end{bmatrix},\\
  D_2(1)&=  \begin{bmatrix}0\\0.6\end{bmatrix},
     &D_2(2)&=  \begin{bmatrix}0\\-0.6\end{bmatrix},
  &D_3(1)&=  \begin{bmatrix}0\\0.6\end{bmatrix},
   & D_3(2)&=  \begin{bmatrix}0\\-0.6\end{bmatrix},\\
 E_1(1)&=  \begin{bmatrix}0.1;\\0.1\end{bmatrix},
  & E_1(2)&=  \begin{bmatrix}0.1\\-0.1\end{bmatrix},
 & E_2(1)&=  \begin{bmatrix}0.2\\0.1\end{bmatrix},
  & E_2(2)&=  \begin{bmatrix}0.1\\0.2\end{bmatrix},\\
  E_3(1)&=  \begin{bmatrix}0.1\\0.1\end{bmatrix},
   & E_3(2)&=  \begin{bmatrix}0.1\\0.2\end{bmatrix},
 &F_1(1)&=  \begin{bmatrix}0.5\\-0.1\end{bmatrix},
   & F_1(2)&=  \begin{bmatrix}0.5\\-0.1\end{bmatrix},\\
 F_2(1)&=  \begin{bmatrix}-0.3\\-0.2\end{bmatrix},
   &F_2(2)&=  \begin{bmatrix}0.2\\0.2\end{bmatrix},
  &F_3(1)&=  \begin{bmatrix}-0.1\\-0.2\end{bmatrix},
  & F_3(2)&=  \begin{bmatrix}0.1\\0.2\end{bmatrix},\\
G_1(1)&=  \begin{bmatrix}-0.3\\0.3\end{bmatrix},
  & G_1(2)&=  \begin{bmatrix}0.3\\-0.4\end{bmatrix},
 & G_2(1)&=  \begin{bmatrix}0.3\\-0.4\end{bmatrix},
   &G_2(2)&=  \begin{bmatrix}0.4\\0\end{bmatrix},\\
  G_3(1)&=  \begin{bmatrix}-0.3\\-0.3\end{bmatrix},
  & G_3(2)&=  \begin{bmatrix}0.4\\0\end{bmatrix},
   &H_1(1)&=\begin{bmatrix}0.8 &   0.3\end{bmatrix},
  & H_1(2)&=\begin{bmatrix}0.7 & 0.2\end{bmatrix},\\
  H_2(1)&=\begin{bmatrix}0.2 & 0.6\end{bmatrix},
  &H_2(2)&=\begin{bmatrix}0.8 &        0.2\end{bmatrix},
 &H_3(1)&=\begin{bmatrix}0.1 &   0.5\end{bmatrix},
 & H_3(2)&=\begin{bmatrix}0.7 &        0.3\end{bmatrix}.
\end{align*}

The local uncertainties and interconnections were chosen to be the following form,
\begin{equation}
\left\{
\begin{array}{ll}
\dot{x}_{\xi i}(t)=-10 x_{\xi i}(t)+10\zeta_i(t),\\
\xi_i(t)=x_{\xi i}(t),
\end{array}
\right.\quad
%\end{equation*}
%and
%\begin{equation*}
\left\{
\begin{array}{ll}
\dot{x}_{r i}(t)=\begin{bmatrix}
-1 & 1\\
-1 & -1
\end{bmatrix} x_{r i}(t)+10\begin{bmatrix}
1\\1
\end{bmatrix}\sum_{j=1,j\neq i}^3\zeta_j(t),\\
r_i(t)=\begin{bmatrix}
0.7 &-0.7
\end{bmatrix}x_{r i}(t), \quad i=1,2,3
\end{array}
\right.
\end{equation}

Both the uncertainties have the stable dynamic systems, and are admissible uncertainties according to Definition 3 and Definition 4.  Given $\gamma=1.36$, the gains of the global mode dependent controllers of the form (6) are obtained by  Theorem 1:
\begin{align*}
K_1(1)&=\begin{bmatrix}-34.3634 & -36.2178\end{bmatrix},
&K_1(2)&=\begin{bmatrix}-35.4979 & -34.9940\end{bmatrix},
&K_1(3)&=\begin{bmatrix}-20.6279 & -8.4957\end{bmatrix},\\
K_1(4)&=\begin{bmatrix}-17.5842 & -7.1341\end{bmatrix},
&K_2(1)&=\begin{bmatrix}-3.3655 & 8.6813\end{bmatrix},
&K_2(2)&=\begin{bmatrix}-4.9587 & 2.4603\end{bmatrix},\\
K_2(3)&=\begin{bmatrix}-3.4989 & 9.2715\end{bmatrix},
&K_2(4)&=\begin{bmatrix}-5.19097 & 1.7874\end{bmatrix},
&K_3(1)&=\begin{bmatrix}1.9156 & -7.0101\end{bmatrix},\\
K_3(2)&=\begin{bmatrix}-8.7706 & 2.4661\end{bmatrix},
&K_3(3)&=\begin{bmatrix}-8.3017 & 0.0576\end{bmatrix},
&K_3(4)&=\begin{bmatrix}1.4272 & -7.1653\end{bmatrix}.
\end{align*}

Using (8), the local mode dependent controllers are obtained,
 \begin{align*}
 K_1(1)&=\begin{bmatrix}-34.7055 &    -35.8488\end{bmatrix},
 &K_1(2)&=\begin{bmatrix}-18.5513 &    -7.5668\end{bmatrix},
 &K_2(1)&=\begin{bmatrix}-3.4099 & 8.8778\end{bmatrix},\\
 K_2(2)&=\begin{bmatrix}-5.1242 &     1.9806\end{bmatrix},
 &K_3(1)&=\begin{bmatrix}1.6630 & -7.0904\end{bmatrix},
 &K_3(2)&=\begin{bmatrix}-8.5192 & 1.1747\end{bmatrix}.
\end{align*}

Without controllers, the open-loop system is not stable, and the state trajectories are shown in Fig. 3.

\begin{center}
\begin{picture}(70,150)
\put(-60,0){\resizebox{70mm}{50mm}{\includegraphics[width=0.5\textwidth]{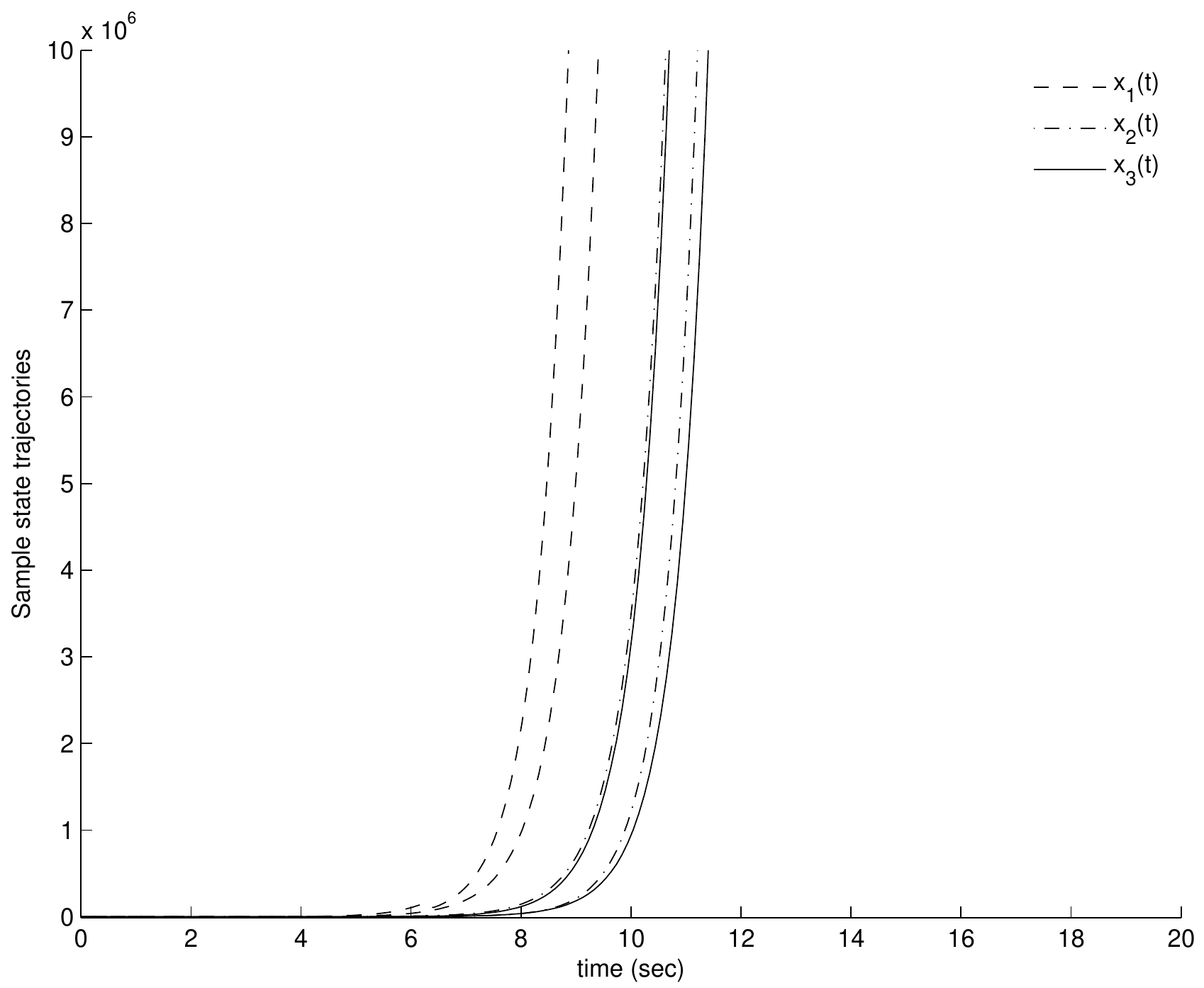}}}
\put(-100,-10){\shortstack{\quad  Fig. 3. Initial response  of the open-loop system without disturbance signals. }}
\end{picture}
\end{center}

Suppose the disturbance input signals are given by  $w_i(t)=e^{-0.5t},~i=1,2,3$, the initial state response of the closed-loop system is shown in Fig. 4. It shows the controllers can effectively stabilize the system.

\begin{center}
\begin{picture}(70,150)
\put(-60,0){\resizebox{70mm}{50mm}{\includegraphics[width=0.5\textwidth]{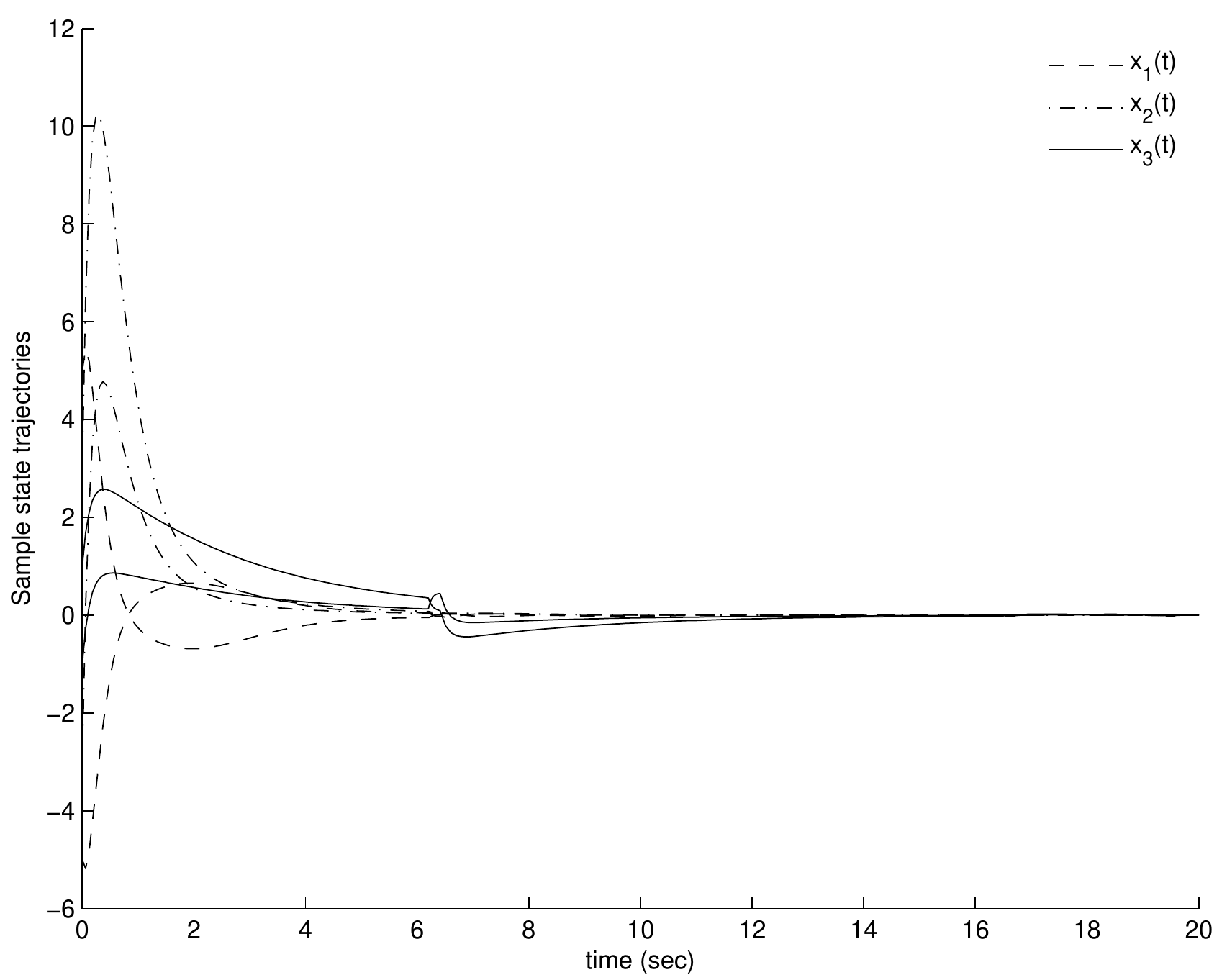}}}
\put(-100,-10){\shortstack{\quad  Fig. 4. Initial response  of the closed-loop system with disturbance signals. }}
\end{picture}
\end{center}

%\begin{figure}
%\centering
%\includegraphics[width=0.5\textwidth]{fig1.pdf}
%\caption{Initial condition response  of the open-loop system.}
%\end{figure}
%
%\begin{figure}
%\centering
%\includegraphics[width=0.5\textwidth]{fig2.pdf}
%\caption{Initial condition response  of the closed-loop system.}
%\end{figure}

\section{Conclusions}

This paper has studied the decentralized state feedback $H_\infty$ control problem  for a class of uncertain Markovian jump large-scale systems. The proposed controllers are constructed with local system states and local operation modes of the subsystems to guarantee the entire system  stable with $H_{\infty}$-performance.  A sufficient condition in terms of rank constrained LMIs has been developed to construct such controllers. Finally, the developed theory has been illustrated by a numerical example.

\end{document}